%% file: main.tex
\documentclass[lettersize,journal]{IEEEtran}
\usepackage{iftex}
\ifXeTeX
\usepackage[T1]{fontenc}
\fi
\usepackage{amsmath,amssymb,amsthm}
\usepackage{algorithmic}
\usepackage{algorithm}
\usepackage{array}
\usepackage[caption=false,font=normalsize,labelfont=sf,textfont=sf]{subfig}
\usepackage{textcomp}
\usepackage{stfloats}
\usepackage{verbatim}
\usepackage{cite}
\usepackage{graphicx}
\usepackage{url}
\newtheorem{theorem}{Theorem}
\newtheorem{lemma}{Lemma}
\newtheorem{proposition}{Proposition}
\newtheorem{corollary}{Corollary}
\newcommand{\CN}{\mathcal{CN}}
\newcommand{\pt}{p_{\mathrm t}}
\begin{document}
\title{Finite-History Outage Certification for\\Partially Observed Fluid Antenna Receivers}
\author{Yuanjun Shen,~\IEEEmembership{Member,~IEEE}
\thanks{Yuanjun Shen is with the National Key Laboratory of Radar Detection and Sensing, Department of Electronic Engineering, Xidian University, Xi'an 710071, China (e-mail: yuanjun.shen@xidian.edu.cn).}}
\markboth{Journal of \LaTeX\ Class Files,~Vol.~14, No.~8, August~2021}%
{Shell \MakeLowercase{\textit{et al.}}: A Sample Article Using IEEEtran.cls for IEEE Journals}
\IEEEpubid{0000--0000/00\$00.00~\copyright~2021 IEEE}
\maketitle
\begin{abstract}
Fluid antenna receivers can select a service position from partial port observations, but finite historical data leave uncertainty in the covariance used for that decision. We develop deployment-before-use reliability certification for a frozen partial-observation rule. A proper predictive density defines a history-based parameter region. Gaussian covariance envelopes, shared conditional-probability bounds and selector enclosures provide simultaneous risk bounds over continuous parameters and discontinuous port decisions. Separate history-coverage and integration budgets limit erroneous positive certification to 0.05 for a single qualification procedure under the stated stationary model and numerical assumptions. Confidence-set inclusion allows a new independent history to reuse the risk computation with the rule unchanged. In a structured Gaussian example with 32 candidate positions, five noisy probes and target outage 0.001, all 282 risk groups are qualified; the largest upper bound is 0.000999748. A separate coupled comparison gives a current-rule outage interval of $[7.6820,9.1266]\times10^{-4}$ and resolves lower outage than two observed-port controls using the same historical fit. One independently recalibrated history reuses the certificate without additional risk sampling. Unresolved and adverse-member controls delimit this reuse.
\end{abstract}
\begin{IEEEkeywords}
Fluid antenna systems, partial channel observations, outage probability, finite-history uncertainty, reliability certification.
\end{IEEEkeywords}
\IEEEpeerreviewmaketitle
\input{sections/01_introduction}
\input{sections/02_system_model}
\input{sections/03_certificate}
\input{sections/04_continuous_envelope}
\input{sections/05_recalibration}
\input{sections/06_numerical_results}
\input{sections/07_discussion}
\input{sections/08_conclusion}
\input{sections/09_appendices}
\bibliographystyle{IEEEtran}
\bibliography{references}
\end{document}

%% file: sections/01_introduction.tex
\section{Introduction}
\IEEEPARstart{F}{luid} antenna systems (FASs) provide alternative receiving positions within a finite aperture, using spatial selection to obtain diversity with a single radio-frequency chain~\cite{wong2021,new2024}. This position flexibility forms part of a broader approach to reconfigurable wireless receivers~\cite{wong2022vision}. Recent tutorials connect the channel models, estimation methods and hardware designs needed to realize this flexibility~\cite{new2025tutorial}. Liquid-reconfigurable arrays offer hardware routes to dynamic radiation control~\cite{martinez2022}, and fabricated surface-wave-enabled fluid antennas demonstrate position-dependent radiation in the 24--30~GHz band~\cite{shen2026}. These developments motivate receiver methods that connect available observations to the reliability of the selected service state.

Obtaining current channel observations at every candidate position can require a substantial probing budget. Partial-observation methods exploit spatial dependence to choose a service port, including positions that have not been probed in the current block. Port-selection and reconstruction approaches provide useful tools for this purpose~\cite{chai2022,new2025,han2026}.

The deployment question is different from whether a fitted model predicts good performance. Historical observations estimate the channel statistics but do not identify them exactly. A rule that appears to meet a target under one fitted covariance may have a larger outage probability under another statistically plausible covariance. We ask whether a frozen partial-observation rule can be qualified before use, with explicit control of the probability of incorrectly issuing a positive certificate.

Figure~\ref{fig:scenario} locates this qualification step within the downlink receiver. Partial probing and service selection operate within each channel block; the proposed certificate is computed offline before the frozen rule is used. Figure~\ref{fig:workflow} gives the corresponding qualification and recalibration workflow.

\begin{figure*}[t]
\centering
\includegraphics[width=\textwidth]{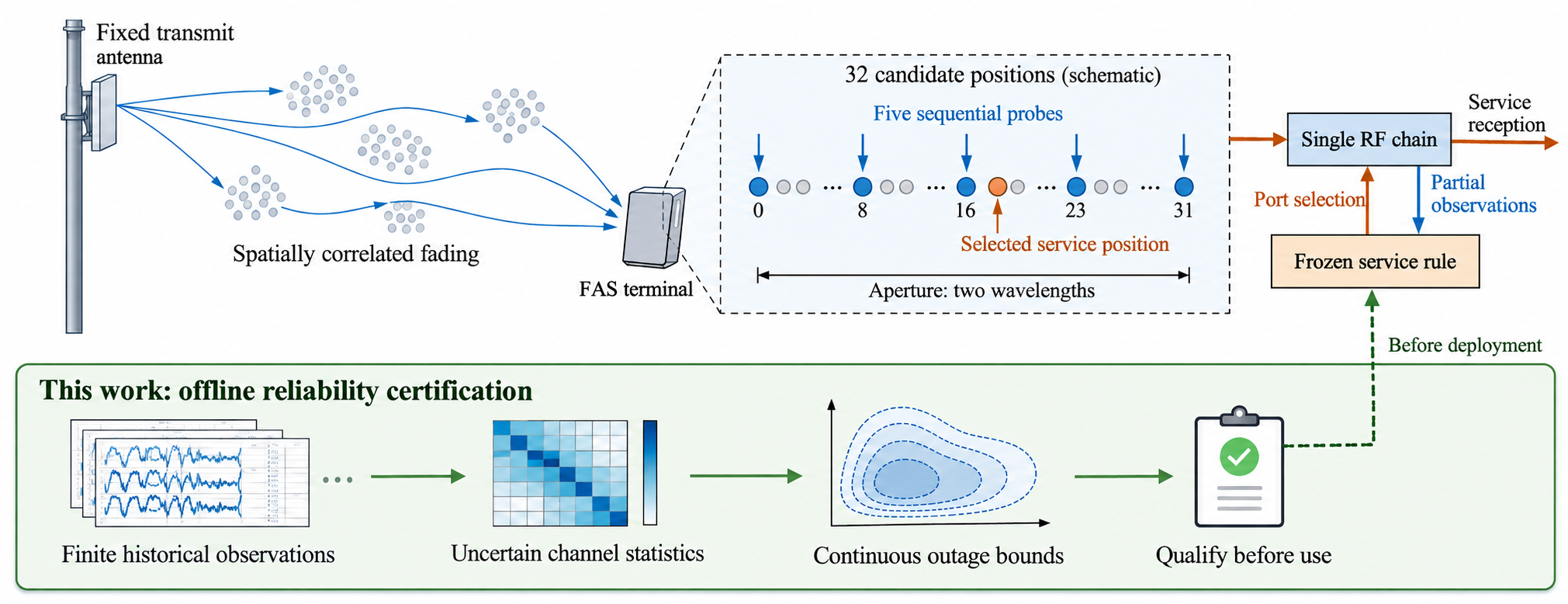}
\caption{System setting and location of the proposed certification step. A fixed transmit antenna serves a single-RF-chain FAS terminal through spatially correlated fading. In the numerical benchmark, the receiver sequentially probes the fixed positions $S=\{0,8,16,23,31\}$ and uses a frozen rule to select a service position from all 32 candidates. The orange service position is one illustrative choice, not an additional probe; omitted positions are schematic. Finite historical observations constrain the uncertain channel statistics, and continuous-region outage bounds qualify the rule before deployment under the stated model. The green dashed connection conveys a qualification decision, not a real-time signal-processing stage or additional channel observations.}
\label{fig:scenario}
\end{figure*}

\IEEEpubidadjcol
Existing work supplies important parts of this receiver architecture. Chai et al.~\cite{chai2022} study port selection from limited observations and evaluate the resulting outage. New et al.~\cite{new2025} develop spatial sampling and channel reconstruction, including confidence-based estimation-error requirements. Han et al.~\cite{han2026} derive conditional port statistics and select a port by its conditional outage under a specified covariance. Learning-based extrapolation also connects sparsely observed channels to outage performance~\cite{jin2026}, while Gaussian-process interpolation incorporates noisy measurements and switching delays into spatio-temporal selection~\cite{dinis2026}. These works supply selection and reconstruction tools under their respective modeling and acquisition assumptions. Here, finite historical observations leave the covariance within a statistically plausible family, and the question is whether a fixed rule satisfies an outage target throughout that uncertainty. An estimation-error interval or a conditional probability evaluated at a fitted covariance concerns a different probability object from erroneous qualification of a fitted receiver.

Related work on certified outage dimensioning addresses required SNR, aperture and port count under known covariance and ideal port selection~\cite{shen2026dimensioning}. The present problem concerns finite-history covariance uncertainty and a rule using noisy partial observations. The method combines a finite-history parameter region with risk upper bounds valid over continuous regions, including uncertainty in numerical evaluation. Its computational construction addresses discontinuous port decisions without assuming that a floating-point minimizer is always the exact minimizer. A second use case retains the policy and certified parameter region, allowing a new independent calibration history to qualify the same policy without repeating risk integration. Conditional Gaussian port selection, density-ratio confidence constructions and bounded-sample concentration tools serve as components; the contribution concerns their receiver-specific, continuously evaluated certification procedure.

The contributions are as follows.
\begin{itemize}
\item A deployment-before-use procedure separates finite-history parameter uncertainty from integration uncertainty and controls erroneous positive certification of a frozen receiver rule. Its probability object is the risk averaged over future fading and probe noise, rather than one block's conditional probability.
\item A continuous-risk construction combines joint Gaussian domination, shared conditional-probability anchors and enclosures of all possibly selected ports. It covers parameter cells and discontinuous decisions, with bounded-variable integration and explicit numerical payments.
\item Fixed-policy recalibration tests whether a new history confidence set lies inside the existing certified region. A complete 32-position, five-probe example demonstrates qualification and reuse, together with independent performance comparisons, separate observation/selection cost accounting, and unresolved or adverse-member controls.
\end{itemize}
The numerical configuration illustrates these guarantees; physical port count and probe budget are not optimization outputs of this study.

%% file: sections/02_system_model.tex
\section{System Model and Certification Objective}
\subsection{Receiver and Observations}
Consider a single-user downlink with a fixed transmit antenna and a single-RF-chain FAS receiver. The $N$ candidate positions span an aperture of $W$ wavelengths, with normalized coordinates $d_n=Wn/(N-1)$ for $n=0,\ldots,N-1$. A block channel follows $\mathbf h\sim\CN(\mathbf0,\mathbf R_{\phi_0})$, with unit diagonal covariance. The unknown parameter $\phi_0$ belongs to a specified angular-spectrum family. The model assumes that the channel remains constant throughout probing and service selection within a block.

We use the proper complex-Gaussian convention~\cite{goodman1963}. For a fixed probe set $S$, the receiver observes
\begin{equation}
\mathbf Y=\mathbf h_S+\boldsymbol\eta,\qquad
\boldsymbol\eta\sim\CN(\mathbf0,\sigma_\eta^2\mathbf I),
\label{eq:observations}
\end{equation}
where noise is independent of the channel and represents probe-observation error. Historical data $D$ determine a fitted model and a service rule $\pi_D:\mathbf Y\mapsto\{0,\ldots,N-1\}$. The deployed rule is frozen after fitting. For a fitted covariance $\widehat{\mathbf R}$, define $\widehat{\mathbf K}=\widehat{\mathbf R}_{SS}+\sigma_\eta^2\mathbf I$. Its conditional mean and variance at candidate $n$ are
\begin{align}
\widehat m_n(\mathbf y)&=\widehat{\mathbf R}_{nS}\widehat{\mathbf K}^{-1}\mathbf y,\\
\widehat v_n&=\widehat R_{nn}-\widehat{\mathbf R}_{nS}\widehat{\mathbf K}^{-1}\widehat{\mathbf R}_{Sn}.
\end{align}
The computed fitted gain and variance coefficients are stored once and treated as exact binary constants, $\widetilde{\mathbf g}_n$ and $\widetilde v_n$. The canonical rule minimizes the disk probability defined by $\CN(\widetilde{\mathbf g}_n\mathbf y,\widetilde v_n)$ over all candidates, with exact ties resolved by the smallest index. This definition fixes the rule being certified; recomputing an exact covariance inverse or using returned zero CDFs to create artificial ties would define a different rule. Unobserved service positions are eligible; their channels are not supplied to the rule.

For the evaluated family, $\phi=(\mu,\sigma)$ denotes mean angle and spread in degrees. Let $\omega_k=-90+180k/4096$ for $k=0,\ldots,4096$. Positive normalized weights define the covariance by
\begin{align}
w_k(\phi)&=\frac{c_k e^{-(\omega_k-\mu)^2/(2\sigma^2)}}{\sum_jc_j e^{-(\omega_j-\mu)^2/(2\sigma^2)}},\\
[\mathbf R_\phi]_{ij}&=\sum_kw_k(\phi)e^{2\pi\mathrm{i}(d_i-d_j)\sin(\pi\omega_k/180)}.
\label{eq:angularfamily}
\end{align}
Here $c_0=c_{4096}=1/2$, other $c_k=1$, and $\Theta=[-60,60]\times[5,35]$. This finite-angle Gaussian covariance family has continuous unknown parameters. Approximation of a different physical angular spectrum is separate from certification under this specified family.

\subsection{Two Probability Objects}
For a spectral-efficiency target $R_0$ in bit/s/Hz, define the SNR threshold $\theta=2^{R_0}-1$ and let $x=\theta/\gamma$, where $\gamma$ is the average SNR. With $D$ fixed, the policy risk under parameter $\phi$ is
\begin{equation}
\rho_\phi(\pi_D)=\Pr_{\phi}\{|h_{\pi_D(\mathbf Y)}|^2<x\mid D\}.
\label{eq:risk}
\end{equation}
Equivalently, with $f_\phi$ the probe-observation density and $F_{\phi,n}(x\mid\mathbf y)$ the true-model conditional disk probability,
\begin{equation}
\rho_\phi(\pi_D)=\int f_\phi(\mathbf y)
F_{\phi,\pi_D(\mathbf y)}(x\mid\mathbf y)\,\mathrm d\mathbf y.
\label{eq:riskintegral}
\end{equation}
The fitted covariance determines the rule; the candidate true covariance determines the risk in this integral. It averages over new block fading and probe noise, and differs from the conditional outage for one realized $\mathbf Y$ and from the probability that a certification procedure makes an error. For an application-specified $\pt\in(0,1)$, the latter objective is
\begin{equation}
\Pr\{\text{certificate issued},\ \rho_{\phi_0}(\pi_D)>\pt\}
\le\alpha+\delta.
\label{eq:objective}
\end{equation}
The probability in~\eqref{eq:objective} is over history and certification randomness. It is not a posterior safety statement conditional on an issued certificate.

The qualification task is to verify an outage upper bound throughout a history-supported parameter region and issue a positive decision only if that bound meets the target. The reuse task is to decide whether a new independent history supports the same certified region, while retaining the policy and operating threshold. Sections~III--IV solve the first task; Section~V addresses the second.

Our numerical benchmark sets $\pt=0.001$, already fixed in the initial pilot protocol. This is a study target rather than a communication-standard requirement. The frozen operating threshold is $x=0.108$; target probability and power threshold are separate design inputs. Theory permits other targets, while the present numerical evidence concerns this operating point.

The coverage and issuance arguments apply to finite $N$ and a fixed probe set $S$ without specifying $N=32$ or $|S|=5$. Implementing their continuous-risk bounds for a different configuration requires new covariance/selector enclosures and numerical qualification. The present evidence is a complete worked configuration, not a port-count convergence or minimum-probe theorem.

\begin{figure}[t]
\centering
\includegraphics[width=\columnwidth]{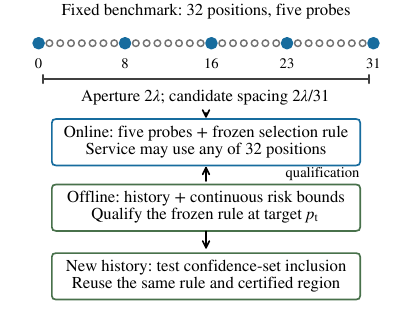}
\caption{Receiver and qualification workflow for the fixed numerical benchmark. Filled positions are the fixed, zero-indexed probe set $S=\{0,8,16,23,31\}$, used in every block of the proposed receiver; the service position varies with the observations and can be any of the 32 candidates. Hollow positions remain eligible service states. Positions denote alternatives for one RF chain, not simultaneous radiators or a combining bus. Offline qualification precedes use; later history can reuse the same rule and certified region.}
\label{fig:workflow}
\end{figure}

%% file: sections/03_certificate.tex
\section{Deployment-Before-Use Certificate}
\subsection{Finite-History Parameter Coverage}
Let $D=(\mathbf Y_1,\ldots,\mathbf Y_B)$ be independent historical observations with density $p_\phi(D)$. Choose a proper joint density $q(D)$ before observing this history and define
\begin{equation}
e_\phi(D)=q(D)/p_\phi(D).
\label{eq:evalue}
\end{equation}
The probe noise ensures positive candidate densities. For example,
\begin{equation}
q(D)=\sum_{j=1}^{M_q}\omega_j\prod_{b=1}^{B}f_{\phi_j}(\mathbf Y_b),
\quad\omega_j\ge0,\quad\sum_j\omega_j=1
\label{eq:mixture}
\end{equation}
is proper for $M_q$ fixed components even when the truth is not among them. The original predictor uses 175 equally weighted components fixed on an angular-parameter grid. This mixture is over whole-history densities, rather than separately mixing each block.

\begin{lemma}[Density-ratio coverage]
For every $\phi_0\in\Theta$, a proper $q$ fixed independently of $D$ satisfies
\begin{equation}
\mathbb E_{\phi_0}e_{\phi_0}(D)=1,\qquad
\Pr_{\phi_0}\{e_{\phi_0}(D)>1/\alpha\}\le\alpha.
\label{eq:ecoverage}
\end{equation}
\end{lemma}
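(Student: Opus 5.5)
The plan is to compute the expectation of the density ratio directly and then apply Markov's inequality. No structure of the angular family or of the receiver rule is needed beyond two facts: $p_{\phi_0}$ is strictly positive, and $q$ is a proper density fixed before the history is observed.

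\emph{Step 1: positivity.} Under~\eqref{eq:observations}, each $\mathbf Y_b$ is proper complex Gaussian with covariance $\mathbf K_{\phi_0}=[\mathbf R_{\phi_0}]_{SS}+\sigma_\eta^2\mathbf I$. This matrix is positive definite because $\sigma_\eta^2>0$ and $[\mathbf R_{\phi_0}]_{SS}\succeq\mathbf 0$. Hence $f_{\phi_0}(\mathbf y)>0$ for every $\mathbf y$. By independence of the blocks, $p_{\phi_0}(D)=\prod_b f_{\phi_0}(\mathbf Y_b)>0$ everywhere, so $e_{\phi_0}(D)$ in~\eqref{eq:evalue} is a well-defined, finite, nonnegative random variable.

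\emph{Step 2: the expectation.} Because $q$ does not depend on $D$, the identity follows by writing the expectation as an integral against $p_{\phi_0}$:
\begin{equation*}
\mathbb E_{\phi_0}e_{\phi_0}(D)=\int p_{\phi_0}(D)\frac{q(D)}{p_{\phi_0}(D)}\,\mathrm dD=\int q(D)\,\mathrm dD=1.
\end{equation*}
Here the support of $p_{\phi_0}$ is the whole observation space, and properness of $q$ gives unit mass. For the mixture~\eqref{eq:mixture}, properness is immediate: each product $\prod_b f_{\phi_j}$ is a density, and the weights $\omega_j$ are nonnegative and sum to one. This holds whether or not $\phi_0$ is among the components.

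\emph{Step 3: the tail bound.} Since $e_{\phi_0}(D)\ge 0$, Markov's inequality gives
\begin{equation*}
\Pr_{\phi_0}\{e_{\phi_0}(D)>1/\alpha\}\le\Pr_{\phi_0}\{e_{\phi_0}(D)\ge1/\alpha\}\le\alpha\,\mathbb E_{\phi_0}e_{\phi_0}(D)=\alpha.
\end{equation*}

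The main point requiring care is the equality in Step 2, not an inequality. If $p_{\phi_0}$ vanished on a set where $q$ has mass, the integral would only give $\le 1$. That would still suffice for the tail bound, but not for the stated identity. Positivity of the probe-noise density, established in Step 1, rules this out.

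A second point is the independence of $q$ from $D$. If $q$ were selected after seeing $D$, the change of measure would fail; the lemma's hypothesis excludes this. Numerical evaluation of $q$ through the 175-component grid is a separate implementation matter and does not enter the lemma.
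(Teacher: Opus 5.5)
Your proof is correct and follows the paper's own argument: integrate the ratio against $p_{\phi_0}$ to obtain $\int q(D)\,\mathrm dD=1$, then apply Markov's inequality. Your Step~1 spells out the positivity of the probe-noise density, which the paper notes just before the lemma; this positivity is what gives equality rather than only $\le 1$.
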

\begin{proof}
Integrating the density ratio against $p_{\phi_0}$ gives $\int q(D)\,\mathrm dD=1$. Apply Markov's inequality.
\end{proof}
This is standard likelihood-ratio reasoning for finite-sample inference~\cite{wasserman2020}; the nonnegative statistic is an e-value under each candidate model~\cite{ramdas2023}. The receiver policy may be fitted from the same $D$ because the risk evaluation is conditional on its realized, subsequently frozen rule. In contrast, fitting the numerator freely on these calibration observations does not preserve this integral identity.

Let $O(D)$ be the retained union of parameter boxes and $\mathcal E(D)$ a complete cover of $\Theta\setminus O(D)$. On each excluded box $B$, compute a verified lower bound $\underline\ell_B$ for $\inf_{\phi\in B}\log e_\phi(D)$. Define
\begin{equation}
V(D)=\{\underline\ell_B>\log(1/\alpha)
\text{ for all }B\in\mathcal E(D)\}.
\label{eq:eligibility}
\end{equation}
Geometric coverage must include all boundaries. If $V(D)$ holds but $\phi_0\notin O(D)$, then $e_{\phi_0}(D)>1/\alpha$. Consequently,
\begin{equation}
\Pr\{V(D),\phi_0\notin O(D)\}\le\alpha.
\label{eq:coverage}
\end{equation}
This controls an intersection event, not conditional coverage given eligibility.

\subsection{Issuance and Computational Uncertainty}
Cover $O(D)$ by finitely many continuous regions $B_g(D)$. Conditional on $D$, require a simultaneous computational event $M$ with probability at least $1-\delta$, on which
\begin{equation}
\sup_{\phi\in B_g(D)}\rho_\phi(\pi_D)\le U_g
\quad\text{for every }g.
\label{eq:group}
\end{equation}
Each bound includes integration and specified numerical uncertainties. Issue a certificate only when $V(D)$ holds and all $U_g<\pt$.

\begin{theorem}[Erroneous-certification control]
\label{thm:issuance}
Under~\eqref{eq:coverage} and~\eqref{eq:group}, the issuance rule satisfies~\eqref{eq:objective}, provided $\Pr(M^c\mid D)\le\delta$ for each eligible history under the stated model and computation assumptions.
\end{theorem}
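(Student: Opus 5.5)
The plan is to split the bad event $\{\text{certificate issued},\ \rho_{\phi_0}(\pi_D)>\pt\}$ according to whether the truth lies in the retained region and whether the computational event holds, and then to bound each piece by one of the two budgets. Issuance requires $V(D)$ and $U_g<\pt$ for every $g$. So the bad event is contained in the union of three events:
\[
A_1=\{V(D),\ \phi_0\notin O(D)\},\qquad
A_2=\{V(D),\ \phi_0\in O(D),\ M^c\},
\]
\[
A_3=\{V(D),\ \phi_0\in O(D),\ M,\ \text{all }U_g<\pt,\ \rho_{\phi_0}(\pi_D)>\pt\}.
\]
Here $M$ is the simultaneous computational event for the realized $D$. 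It is defined jointly over history and certification randomness, with conditional probability given $D$ at least $1-\delta$.

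First, $\Pr(A_1)\le\alpha$ directly from~\eqref{eq:coverage}, which rests on the density-ratio coverage lemma together with the verified lower bounds on the excluded boxes. Second, I will show that $A_3$ is empty. If $\phi_0\in O(D)$, then $\phi_0\in B_g(D)$ for some $g$, because the $B_g(D)$ cover $O(D)$. On $M$, \eqref{eq:group} then gives
\[
\rho_{\phi_0}(\pi_D)\le\sup_{\phi\in B_g(D)}\rho_\phi(\pi_D)\le U_g<\pt,
\]
which contradicts $\rho_{\phi_0}(\pi_D)>\pt$. The frozen rule $\pi_D$ is fixed once $D$ is observed, so $\rho_{\phi_0}(\pi_D)$ is a deterministic function of $D$ and the inclusion argument is pointwise in $D$.

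Third, I will bound $A_2$ by conditioning on the history:
\[
\Pr(A_2)\le\mathbb E\bigl[\mathbf 1\{V(D)\}\Pr(M^c\mid D)\bigr]\le\delta.
\]
This uses the hypothesis that $\Pr(M^c\mid D)\le\delta$ on every eligible history. It also requires that the certification randomness, such as integration samples, be generated after $D$, so that the conditional probability is well defined and the tower property applies. A union bound then gives $\Pr(\text{bad})\le\alpha+\delta$, which is~\eqref{eq:objective}.

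I expect the main obstacle to be measurability and dependence bookkeeping rather than any inequality. The regions $B_g(D)$, the bounds $U_g$ and the rule $\pi_D$ all depend on $D$, and the coverage event~\eqref{eq:coverage} uses the same $D$. I will handle this by stating the argument for a fixed realization $D$ inside the conditional expectation, and invoking~\eqref{eq:coverage} only for the unconditional term $A_1$. No independence between $M$ and the coverage event is needed, since the union bound does not require it.
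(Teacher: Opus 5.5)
Your proposal is correct and follows essentially the same route as the paper. The paper contains erroneous issuance in $\{V(D),\phi_0\notin O(D)\}\cup\{V(D),M^c\}$ and bounds these pieces by $\alpha$ via~\eqref{eq:coverage} and by $\mathbb E[\mathbf 1_{V(D)}\Pr(M^c\mid D)]\le\delta$, which matches your $A_1$, your $A_2$ (a subset of the paper's second event) and your empty $A_3$; your explicit use of $\bigcup_g B_g(D)\supseteq O(D)$ only spells out a step the paper leaves implicit.
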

\begin{proof}
On $M$, with $V(D)$ and $\phi_0\in O(D)$, issuance implies $\rho_{\phi_0}(\pi_D)\le\max_gU_g<\pt$. An erroneous issuance is therefore contained in
\begin{equation}
\{V(D),\phi_0\notin O(D)\}\cup\{V(D),M^c\}.
\end{equation}
The first probability is at most $\alpha$. The second is bounded by $\mathbb E[\mathbf1_{V(D)}\Pr(M^c\mid D)]\le\delta$. Their union proves~\eqref{eq:objective} without conditioning on an issued certificate.
\end{proof}

\subsection{A Prespecified Independent Contingency}
Conditional on $D$, the primary integration family has simultaneous failure probability at most $\delta_1$. If it stops with unresolved groups, one new family uses independent inputs, with failure probability at most $\delta_2$ conditional on $D$ and all primary results. Allocations cover every potential group and look, including those eventually unused. The two-family failure probability is at most $\delta_1+\delta_2$. Selecting unresolved groups after primary evaluation does not alter this argument. Passed primary bounds are retained; unresolved primary moments are not pooled with new moments.

The implemented allocations are $(\alpha,\delta_1,\delta_2)=(0.024,0.025,0.001)$. For 282 potential groups and three looks, weights $(1/4,1/4,1/2)$ give $\delta_{f,g,k}=\delta_f w_k/282$. Their sum is exactly $\delta_f$. Acceptance is tested only at these looks; a time limit returns unresolved rather than introducing a new test. The single-issuance failure budget is $0.05$.

The original retained region used threshold $\log(1/0.025)$. Eligibility is verified separately at $\log(1/0.024)$, stronger by $\log(25/24)$. All excluded-box lower bounds meet this requirement in the reported instance. No confidence set is relabeled, and no third integration attempt is part of this procedure. Returning no positive certificate does not by itself establish unsafe operation.

%% file: sections/04_continuous_envelope.tex
\section{Continuous Risk Evaluation}
\subsection{Joint Gaussian Domination}
Fix the policy and threshold conditional on the realized history. Let $\mathbf T_{\phi,n}$ be the covariance of $(\mathbf Y,h_n)$. For each continuous parameter cell $C_c$, construct positive-definite matrices satisfying
\begin{equation}
\mathbf T^-_{c,n}\preceq\mathbf T_{\phi,n}\preceq\mathbf Q_{c,n},
\qquad\phi\in C_c.
\label{eq:jointorder}
\end{equation}
A Taylor enclosure supplies a center matrix, derivative matrices and a remainder. If $|u_a|\le h_a$ and the remainder norm is at most $r_c$, then
\begin{equation}
\mathbf H_{c,n}=\sum_a h_a|\partial_a\mathbf T_{c,n}|+r_c\mathbf I
\end{equation}
encloses the variation in Loewner order, where $|\mathbf A|=(\mathbf A^2)^{1/2}$ for Hermitian $\mathbf A$. Source and assembly errors must also be included. A positive-semidefinite observation-block correction to the upper matrix makes its observation covariance $\mathbf K_c$ common to all ports. Let its conditional mean and variance be $m_{c,n}(\mathbf y)$ and $v_{c,n}>0$, and put
\begin{equation}
b_{c,n}=\det\mathbf Q_{c,n}/\det\mathbf T^-_{c,n}.
\end{equation}

\begin{lemma}[Integrated joint-density bound]
\label{lem:joint}
Under~\eqref{eq:jointorder}, every $\phi\in C_c$ satisfies
\begin{equation}
f_\phi(\mathbf y)F_{\phi,n}(x\mid\mathbf y)
\le b_{c,n}f_{\mathbf K_c}(\mathbf y)F_{c,n}(\mathbf y),
\label{eq:diskdom}
\end{equation}
where $F_{c,n}$ is the conditional disk probability under $\mathbf Q_{c,n}$.
\end{lemma}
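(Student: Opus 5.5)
The plan is to prove the bound pointwise at the level of joint densities of $(\mathbf Y,h_n)$, and only then integrate over the outage disk in the $h_n$ coordinate. Write $\mathbf z=(\mathbf y,h)$ with dimension $d=|S|+1$. For any positive-definite $\mathbf T$, the proper complex Gaussian density is
\begin{equation*}
p_{\mathbf T}(\mathbf z)=\pi^{-d}(\det\mathbf T)^{-1}\exp\bigl(-\mathbf z^{H}\mathbf T^{-1}\mathbf z\bigr).
\end{equation*}
By definition of the conditional disk probability,
\begin{equation*}
f_\phi(\mathbf y)F_{\phi,n}(x\mid\mathbf y)=\int_{|h|^2<x}p_{\mathbf T_{\phi,n}}(\mathbf y,h)\,\mathrm dh .
\end{equation*}
So it suffices to dominate the integrand.

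\textbf{Step 1: normalizing constants.} Monotonicity of the determinant on the positive-definite cone, applied to $\mathbf T^-_{c,n}\preceq\mathbf T_{\phi,n}$, gives $\det\mathbf T_{\phi,n}\ge\det\mathbf T^-_{c,n}>0$. Hence
\begin{equation*}
(\det\mathbf T_{\phi,n})^{-1}\le(\det\mathbf T^-_{c,n})^{-1}.
\end{equation*}

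\textbf{Step 2: quadratic forms.} Operator antitonicity of inversion, applied to $\mathbf T_{\phi,n}\preceq\mathbf Q_{c,n}$ (both positive definite), gives $\mathbf T_{\phi,n}^{-1}\succeq\mathbf Q_{c,n}^{-1}$. Therefore
\begin{equation*}
\exp(-\mathbf z^H\mathbf T_{\phi,n}^{-1}\mathbf z)\le\exp(-\mathbf z^H\mathbf Q_{c,n}^{-1}\mathbf z)\quad\text{for all }\mathbf z.
\end{equation*}

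\textbf{Step 3: combine.} Multiplying the two bounds and inserting the factor $\det\mathbf Q_{c,n}$ yields the pointwise domination
\begin{equation*}
p_{\mathbf T_{\phi,n}}(\mathbf z)\le b_{c,n}\,p_{\mathbf Q_{c,n}}(\mathbf z),
\end{equation*}
uniformly in $\phi\in C_c$ and in $\mathbf z$.

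\textbf{Step 4: integrate over the disk.} Integrating the domination over $\{|h|^2<x\}$ at fixed $\mathbf y$, the right-hand side factors as the marginal of $\mathbf Q_{c,n}$ at $\mathbf y$ times its conditional disk probability. That marginal is $f_{\mathbf K_c}(\mathbf y)$ because the observation block of $\mathbf Q_{c,n}$ is $\mathbf K_c$. The conditional disk probability is $F_{c,n}(\mathbf y)$, computed from the conditional law $\CN(m_{c,n}(\mathbf y),v_{c,n})$. Positive definiteness of $\mathbf Q_{c,n}$ ensures $v_{c,n}>0$, so this conditional law is well defined. This gives~\eqref{eq:diskdom}.

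The step needing most care is the bookkeeping for the upper matrix, not the inequalities themselves. The observation-block correction must keep the Loewner relation $\mathbf T_{\phi,n}\preceq\mathbf Q_{c,n}$. It does, because adding a positive-semidefinite matrix supported on the observation block preserves the order. The corrected matrix must also be exactly the $\mathbf Q_{c,n}$ entering both $b_{c,n}$ and $F_{c,n}$. I would state explicitly that $\mathbf Q_{c,n}$ denotes the post-correction matrix, so that the marginal factor is the common $f_{\mathbf K_c}$ for every $n$. I would also note that the Taylor and assembly enclosures are assumed to deliver~\eqref{eq:jointorder} exactly; the lemma itself uses nothing beyond that ordering.
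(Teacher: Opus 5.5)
Your proposal is correct and follows the paper's proof. Both establish the pointwise joint-density domination $f_{\mathbf T_{\phi,n}}\le b_{c,n}f_{\mathbf Q_{c,n}}$ from determinant monotonicity and inverse (Loewner) order, integrate over the service-channel disk at fixed $\mathbf y$, and factor the upper Gaussian into its $\mathbf K_c$ marginal and conditional disk probability. Your explicit remark that $\mathbf Q_{c,n}$ must denote the post-correction matrix, used in both $b_{c,n}$ and $F_{c,n}$, is a useful clarification that the paper leaves implicit.
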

\begin{proof}
For a proper complex Gaussian, inverse-order and determinant monotonicity imply
\begin{equation}
f_{\mathbf T_{\phi,n}}(\mathbf z)
\le\frac{\det\mathbf Q_{c,n}}{\det\mathbf T^-_{c,n}}f_{\mathbf Q_{c,n}}(\mathbf z).
\end{equation}
Integrate over the service-channel disk at fixed $\mathbf y$. Factor the upper Gaussian into its observation marginal and conditional density to obtain~\eqref{eq:diskdom}.
\end{proof}
The determinant factor pays for an entire continuous cell. Applying the formula only at its center is insufficient.
The Supplementary Implementation Notes give positive angular-weight derivative bounds, the Hessian remainder, the common observation-block correction and the cached auxiliary-model reconstruction. These constructions specify the cell bounds used here; a numerical padding constant alone does not establish~\eqref{eq:jointorder}.

\subsection{One Observation Reference per Group}
Choose $\mathbf K_g\succeq\mathbf K_c$ for every cell in group $g$, with reference density $q_g=f_{\mathbf K_g}$. This uses the usual change-of-measure identity of importance sampling~\cite[Ch.~9]{owen2013}, with covariance ordering providing a bounded weight. The observation-density ratio is
\begin{align}
\omega_c(\mathbf y)&=\frac{\det\mathbf K_g}{\det\mathbf K_c}
 e^{-\mathbf y^H(\mathbf K_c^{-1}-\mathbf K_g^{-1})\mathbf y},\\
0\le\omega_c(\mathbf y)&\le\det\mathbf K_g/\det\mathbf K_c.
\end{align}
Let $\mathcal J(\mathbf y)$ contain the exact port selected by the frozen policy, and let $F_{c,n}\le\overline F_{c,n}\le1$. Define
\begin{equation}
Z_g(\mathbf y)=\max_{c\in g}\max_{n\in\mathcal J(\mathbf y)}
b_{c,n}\omega_c(\mathbf y)\overline F_{c,n}(\mathbf y).
\label{eq:envelope}
\end{equation}

\begin{proposition}[Whole-group risk envelope]
\label{prop:envelope}
If the cells cover $B_g$ and~\eqref{eq:jointorder} holds throughout each cell, then
\begin{align}
\sup_{\phi\in B_g}\rho_\phi(\pi_D)&\le\mathbb E_{q_g}Z_g(\mathbf Y),\label{eq:groupmean}\\
0\le Z_g(\mathbf y)&\le\max_{c,n}b_{c,n}\det\mathbf K_g/\det\mathbf K_c.
\label{eq:rangecap}
\end{align}
\end{proposition}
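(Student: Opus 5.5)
Fix $\phi\in B_g$ and pick a cell $C_c\subseteq$ group $g$ with $\phi\in C_c$; such a cell exists because the cells cover $B_g$. The plan is to bound the integrand of~\eqref{eq:riskintegral} pointwise in $\mathbf y$ by $q_g(\mathbf y)Z_g(\mathbf y)$, integrate, and then take the supremum over $\phi$. Since the right side $\mathbb E_{q_g}Z_g$ does not depend on $\phi$ or $c$, the supremum costs nothing.

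The pointwise chain runs as follows. Write $n^\star=\pi_D(\mathbf y)$, the exact port chosen by the frozen rule. Lemma~\ref{lem:joint} applied with $n=n^\star$ gives
\begin{equation}
f_\phi(\mathbf y)F_{\phi,n^\star}(x\mid\mathbf y)\le b_{c,n^\star}f_{\mathbf K_c}(\mathbf y)F_{c,n^\star}(\mathbf y).
\end{equation}
The port $n^\star$ is fixed by $\mathbf y$ and the stored rule, not by $\phi$, so the lemma applies to it directly. Next rewrite $f_{\mathbf K_c}=\omega_c q_g$, which holds by the definition of $\omega_c$ as the ratio of the two proper complex Gaussian densities; the explicit formula for $\omega_c$ is just this ratio. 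Then replace $F_{c,n^\star}$ by $\overline F_{c,n^\star}$. Because $n^\star\in\mathcal J(\mathbf y)$ by hypothesis, the result is at most the double maximum in~\eqref{eq:envelope}. All factors are nonnegative, so these monotone replacements are legitimate. Integrating then gives $\rho_\phi(\pi_D)\le\int q_gZ_g=\mathbb E_{q_g}Z_g(\mathbf Y)$, and this is~\eqref{eq:groupmean}. Measurability of $Z_g$ is a finite max of measurable functions, assuming $\mathcal J$ is chosen measurably, for example piecewise constant on a measurable partition.

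For~\eqref{eq:rangecap}, nonnegativity is immediate. For the upper cap, use $\overline F\le1$ and $\omega_c(\mathbf y)\le\det\mathbf K_g/\det\mathbf K_c$. The latter follows because $\mathbf K_g\succeq\mathbf K_c\succ0$ implies $\mathbf K_c^{-1}-\mathbf K_g^{-1}\succeq0$, so the exponent is nonpositive. Taking the max over the finitely many $c,n$ then gives the cap.

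The main subtlety is the role of $\mathcal J(\mathbf y)$. The bound requires only that the true selected port is contained in $\mathcal J(\mathbf y)$; no claim is made that a floating-point minimizer identifies it. I would therefore stress that the argument uses inclusion alone, with the maximum over $\mathcal J$ absorbing the discontinuity of the decision. A second point to check is that $\mathbf K_c$ is the observation block of every $\mathbf Q_{c,n}$, which is the common correction stated before Lemma~\ref{lem:joint}. Without this, the factorization inside that lemma would not produce a single $f_{\mathbf K_c}$ across ports.
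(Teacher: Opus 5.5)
Your proposal is correct and follows the paper's proof step for step. It picks the covering cell, applies Lemma~\ref{lem:joint} at the exact selected port $\pi_D(\mathbf y)\in\mathcal J(\mathbf y)$, divides by $q_g$ to get the $\phi$-independent envelope integrand, integrates and takes the supremum, and derives the cap from the density-ratio bound with $\overline F\le1$; your remarks on measurability of $Z_g$ and on the common observation block $\mathbf K_c$ are reasonable details the paper leaves implicit.
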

\begin{proof}
For fixed $\phi\in B_g$, choose its covering cell and apply~\eqref{eq:diskdom} with $n=\pi_D(\mathbf y)\in\mathcal J(\mathbf y)$. Divide by $q_g$ and bound by~\eqref{eq:envelope}. The same integrand bounds every $\phi$, so integration and a supremum give~\eqref{eq:groupmean}. The density-ratio bound and $\overline F\le1$ prove~\eqref{eq:rangecap}.
\end{proof}
The maximum is inside the integral. A maximum of sampled cell-center means would not provide the continuous-region guarantee.

\subsection{CDF Sharing and Selection Enclosure}
For $h\sim\CN(m,v)$, set $a=|m|^2/v$ and $z=x/v$. Its disk probability is
\begin{equation}
F(a,z)=e^{-a}\sum_{k=0}^{\infty}\frac{a^k}{k!}
\left(1-e^{-z}\sum_{j=0}^{k}\frac{z^j}{j!}\right).
\end{equation}
A positive logarithmic series with explicit tails evaluates very small probabilities without cancellation; Appendix~\ref{app:cdf} gives the representation.

Cells can share one conditional anchor $\CN(m_a,v_a)$. With $[s]_+=\max(s,0)$, put
\begin{align}
\Lambda_{c,n}={}&\log(v_a/v_{c,n})+x[1/v_a-1/v_{c,n}]_+\nonumber\\
&+2\sqrt{x}|m_{c,n}/v_{c,n}-m_a/v_a|\nonumber\\
&+|m_a|^2/v_a-|m_{c,n}|^2/v_{c,n}.
\end{align}
Then a valid shared upper is
\begin{equation}
\overline F_{c,n}=\min\{1,e^{\Lambda_{c,n}}F_a\}.
\label{eq:anchor}
\end{equation}
Appendix~\ref{app:anchor} bounds the density ratio on the disk and proves this expression. Sharing reduces CDF calls without equating cell probabilities.

Suppose $[\ell_n(\mathbf y),u_n(\mathbf y)]$ encloses each fitted log-outage. Retain
\begin{equation}
\mathcal J(\mathbf y)=\{n:\ell_n(\mathbf y)\le\min_j u_j(\mathbf y)\}.
\label{eq:selector}
\end{equation}
The minimizing port's lower endpoint is no larger than its exact value, which is no larger than any competitor's value and hence upper endpoint. It is therefore in $\mathcal J$. Possible ties remain included. Maximizing over $\mathcal J$ bounds the canonical rule without assuming continuity of its selected index.

\subsection{Numerical Payments and Simultaneous Integration}
The sampled variable $X_g$ includes CDF and integrand cushions, input-transform coupling and all possibly selected ports. The audited source/rounding ledger establishes, under its arithmetic assumptions,
\begin{equation}
\sup_{\phi\in B_g}\rho_\phi(\pi_D)\le K\mathbb E X_g,
\qquad 0\le X_g\le B_{\max}=2,
\label{eq:numericmean}
\end{equation}
with $K=1.0009457432$. On regular compact inputs the sampled integrand includes a $10^{-6}$ cushion; outside the verified domain or on an uncertain interface it returns the proven global cap. The cap follows from~\eqref{eq:rangecap} with source corrections, not from writing it into code. Sampled cushions and exceptional observations are not added again to the mean.

At a prespecified look with $n\ge2$ independent samples, denote the mean and unbiased variance by $\overline X_g,s_g^2$. The scaled empirical Bernstein bound~\cite[Th.~4]{maurer2009} is
\begin{align}
U_{g,k}=K\Bigg[\overline X_g
&+\sqrt{\frac{2s_g^2\log(2/\delta_{g,k})}{n}}\nonumber\\
&+\frac{7B_{\max}\log(2/\delta_{g,k})}{3(n-1)}\Bigg].
\label{eq:eb}
\end{align}
Sample moments are accumulated as exact integers representing dyadic outputs, and the expression is evaluated outward. A union bound over all potential groups and looks supplies~\eqref{eq:group}; independence across groups is not necessary for that union bound. Unlike an anytime-valid confidence sequence~\cite{ramdas2023}, these fixed-look bounds permit acceptance only at their allocated looks. The contingency does require fresh independent inputs for its conditional guarantee. Assumptions concern the structured model, ideal independent inputs and stated arithmetic semantics, rather than a general proof of a computer or pseudorandom generator.

\begin{algorithm}[t]
\caption{Finite-budget qualification of a frozen rule}
\label{alg:qualification}
\begin{algorithmic}[1]
\REQUIRE History $D$, proper predictor $q$, target $p_{\mathrm t}$; prespecified families, looks and numerical contracts
\STATE Fit and freeze $\pi_D$; construct retained region $O$ and its complete excluded cover
\STATE Verify eligibility $V$ in~\eqref{eq:eligibility}
\IF{eligibility or an enclosure contract fails}
\STATE \textbf{return} no positive certificate
\ENDIF
\STATE Build continuous group envelopes and corrected bounded samplers; set $A=\varnothing$
\FOR{family $f=1,2$ when its prespecified trigger holds}
\FOR{unresolved group $g$ in the frozen schedule}
\FOR{allocated look $k=1,2,3$}
\STATE Draw fresh inputs up to the planned cumulative count, subject to the shared-time limit
\IF{the shared total time cap is reached before this look completes}
\STATE \textbf{return} unresolved qualification
\ENDIF
\STATE Accumulate exact moments only within family $f$; compute outward $U_{g,k}$ using~\eqref{eq:eb}
\IF{$U_{g,k}<p_{\mathrm t}$}
\STATE Add $g$ to $A$; \textbf{break} the look loop
\ENDIF
\ENDFOR
\IF{$g\notin A$}
\STATE \textbf{break} the group loop
\ENDIF
\ENDFOR
\ENDFOR
\IF{every group belongs to $A$}
\STATE \textbf{return} positive certificate with budget $\alpha+\delta_1+\delta_2$
\ELSE
\STATE \textbf{return} unresolved qualification
\ENDIF
\end{algorithmic}
\end{algorithm}
In Algorithm~\ref{alg:qualification}, a final unresolved primary look ends the primary group loop and triggers the second family if shared time remains. An unresolved contingency ends qualification without a third attempt. The second family retains accepted primary bounds and uses independent inputs for remaining groups; it never merges unresolved primary moments. The worked schedule has a shared 12-hour cap covering both families and their inter-phase interval. Exhausting that total cap returns unresolved immediately. The mathematical budget and time/sample exit rules are fixed before confirmation.

%% file: sections/05_recalibration.tex
\section{Fixed-Policy Recalibration}
Retain an existing policy $\pi_0$ and its continuous certified region $O$. Split a new history into pilot data $P$ and independent calibration data $D_1$. Conditional on the original history and $P$, let $q_P(D_1)$ be a proper predictive density. Under a candidate parameter $\phi$, define
\begin{equation}
e_\phi(D_1)=\frac{q_P(D_1)}{p_\phi(D_1)},\qquad
C(D_1)=\{\phi:e_\phi(D_1)\le1/\alpha\}.
\end{equation}
At the true parameter the conditional expectation of $e_{\phi_0}$ is one, conditional on the original history and pilot. Markov's inequality gives $\Pr\{\phi_0\notin C(D_1)\}\le\alpha$. Independence of calibration prevents the predictive density from using the observations on which its ratio is evaluated.

\begin{corollary}[Reuse of a fixed certified region]
\label{cor:reuse}
Suppose the original computational bounds have conditional failure probability at most $\delta$ and $\pi_0$, $x$ and $O$ remain fixed. A rule that issues a new certificate only after verifying $C(D_1)\subseteq O$ and $\max_gU_g<\pt$ has erroneous-issuance probability at most $\alpha+\delta$ under the conditions of Theorem~\ref{thm:issuance}.
\end{corollary}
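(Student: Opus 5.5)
The argument repeats the decomposition in the proof of Theorem~\ref{thm:issuance}, with the new confidence set $C(D_1)$ in place of the retained region $O(D)$ and the inclusion $C(D_1)\subseteq O$ in place of the eligibility event $V(D)$. Let $\mathcal F_0$ denote the $\sigma$-field generated by the original history, the pilot $P$ and the original computation. Conditional on $\mathcal F_0$, the policy $\pi_0$, the threshold $x$, the region $O$, its groups $B_g$ and the bounds $U_g$ are all fixed. Let $M$ be the original computational event on which~\eqref{eq:group} holds for every group covering $O$. By assumption, $\Pr(M^c\mid\text{original history})\le\delta$.

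The first step is to show that, on the event that a new certificate is issued, $\rho_{\phi_0}(\pi_0)>\pt$ forces $\phi_0\notin C(D_1)$ or $M^c$. Suppose instead that $\phi_0\in C(D_1)$ and $M$ holds. The verified inclusion gives $\phi_0\in O$, so $\phi_0$ lies in some covering group $B_g$. On $M$, \eqref{eq:group} then gives $\rho_{\phi_0}(\pi_0)\le U_g\le\max_gU_g<\pt$, which is a contradiction. The erroneous-issuance event is therefore contained in $\{\phi_0\notin C(D_1)\}\cup M^c$, intersected with the issuance event.

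The second step bounds each of these two events. For the first, $q_P$ is fixed given $\mathcal F_0$, and $D_1$ is independent of $\mathcal F_0$ with density $p_{\phi_0}$. The argument of Lemma~1, applied conditionally, gives $\mathbb E[e_{\phi_0}(D_1)\mid\mathcal F_0]=1$. Markov's inequality then gives $\Pr\{\phi_0\notin C(D_1)\mid\mathcal F_0\}\le\alpha$, and taking expectations removes the conditioning. For the second, $M^c$ depends only on the original history and the original computation, so averaging its conditional bound gives probability at most $\delta$, whatever $D_1$ turns out to be. A union bound yields $\alpha+\delta$. The recalibration step samples no new risk integrals, so it adds no further budget.

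The main obstacle is the conditioning structure rather than any inequality. The computational event $M$ was created before $D_1$ was observed, and the inclusion check $C(D_1)\subseteq O$ is a data-dependent selection. I need to confirm that neither feature invalidates the bound on $M^c$. I would handle this by noting that the two-event containment holds pointwise, and that $M$ concerns the supremum over all of $O$, so the bound does not depend on where $\phi_0$ falls inside $O$. I would also verify the following technical points:
\begin{itemize}
\item In practice $C(D_1)\subseteq O$ is checked through a complete cover of $\Theta\setminus O$ by boxes with verified lower bounds $\underline\ell_B>\log(1/\alpha)$, exactly as in~\eqref{eq:eligibility}.
\item That check must cover all boundaries, so that the inclusion is certified over the continuum and not only at grid points.
\item The independence of $P$ and $D_1$ is what keeps $q_P$ proper and fixed relative to $D_1$.
\end{itemize}
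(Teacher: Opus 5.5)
Your proposal is correct and follows the paper's proof. It uses the same containment of erroneous issuance in $\{\phi_0\notin C(D_1)\}\cup M^c$ from Theorem~\ref{thm:issuance}, bounds the first event by conditional Markov coverage of the new calibration set, and pays the original computational failure $\delta$ in full rather than conditioning on its apparent success. The paper's short proof leaves this conditioning on the original history, pilot and computation implicit; your version spells it out.
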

\begin{proof}
On the original computational event and new parameter-coverage event, issued inclusion places the truth inside the certified region. Apply the event containment of Theorem~\ref{thm:issuance}. The original computational failure is still paid; no conditioning on its observed success is invoked.
\end{proof}
Inclusion is established by lower-bounding the log density ratio over every continuous leaf of the complement of $O$, with complete deterministic refinement where needed. For $B_1$ calibration blocks, $m=|S|$ and empirical covariance $\widehat{\mathbf S}_1$, the log-ratio using the normalized predictive density is
\begin{align}
f_{B_1}(\phi)=\log q_P(D_1)+B_1\big[&m\log\pi+\log\det\mathbf K_\phi\nonumber\\
 &+\operatorname{tr}(\mathbf K_\phi^{-1}\widehat{\mathbf S}_1)\big].
\end{align}
Its coordinate derivative is
\begin{equation}
\partial_a f_{B_1}=B_1\operatorname{tr}\!\left[
(\mathbf K^{-1}-\mathbf K^{-1}\widehat{\mathbf S}_1\mathbf K^{-1})\partial_a\mathbf K\right].
\end{equation}
Center, gradient and remainder bounds cover the whole box. Source, inverse and empirical-covariance errors are paid before exclusion; Appendix~\ref{app:inverse} gives an inverse bound. Excluding every complement box at $\log(1/\alpha)$ proves inclusion. Complete deterministic refinement introduces no new statistical look.

Reuse concerns the unchanged policy and threshold, and is not an independent second risk integration. Original and new issuance statements pay separate parameter-coverage budgets when combined: two $0.024$ parameter allocations and one reused $0.026$ integration allocation give the simple joint failure budget $0.074$, rather than $0.05$.
For $L$ prespecified independent recalibration histories using that same computational event, the union bound is $\delta+\sum_{j=0}^{L}\alpha_j$, where $j=0$ denotes the original qualification. Thus repeated announcements require an allocated total coverage budget even when the risk computation is reused.

%% file: sections/06_numerical_results.tex
\section{Numerical Results}
\subsection{Fixed Benchmark and Complete Certificate}
The benchmark has $N=32$ positions in a two-wavelength aperture. The five probe indices are $[0,8,16,23,31]$, and probe-noise variance is $0.01$. The angular-family parameters are mean angle and spread, over $[-60,60]\times[5,35]$ in degrees. The original synthetic history has 5000 blocks: its first 500 fit the policy, and the full history constructs the calibration region. These are 25000 historical port observations, not measurements from a physical prototype.

The candidate count was fixed in the initial study protocol. At $W=2$, it gives a position grid with spacing $2\lambda/31\simeq0.0645\lambda$, separating many eligible service states from a small observation budget. These states need not be simultaneous physical radiators. Thirty-two is a finite benchmark, not a demonstrated continuous-aperture limit.

Five probes provide a sparse-observation instance with 27 unobserved positions still eligible for service. The fixed set $S=\operatorname{round}(\operatorname{linspace}(0,31,5))$ spans both aperture endpoints and distributes observations across the aperture. Its consecutive gaps are $(16,16,14,16)\lambda/31$, approximately $0.452\lambda$--$0.516\lambda$. Selection uses conditional outage probabilities, without assuming exact reconstruction of all 32 instantaneous channel coefficients. Earlier exploratory screening considered counts $4,8,16,32$, then $5,6,7$ on independent design data in a Jakes example. The five-probe layout was retained for the subsequent directional-family study and frozen before the present confirmation. Its qualification at the current threshold is established by the continuous-region bounds below; the earlier screen supplies configuration provenance rather than a minimum-probe result for this family.

All 282 continuous risk groups pass the simultaneous certificate. Their maximum upper bound is $0.0009997474561274761<0.001$. Confirmation uses 667451392 synthetic inputs and 8.761852 recorded computation hours. A synthetic input integrates a conditional-risk envelope; it is not a measured hardware failure or transmission.

Figure~\ref{fig:groups} shows every final group bound in audit-index order. Sixty-eight bounds come from the primary family and 214 from the independently conditional contingency. The gray trace is the corrected sampled envelope mean; it is not a true-channel outage curve. The smallest remaining gap between a group upper bound and the target is about $2.53\times10^{-7}$. This is a certification margin, rather than a physical link margin.

\begin{figure}[t]
\centering
\includegraphics[width=\columnwidth]{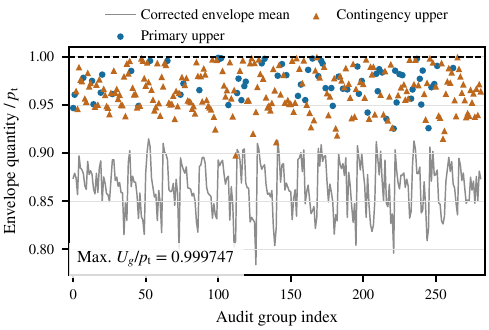}
\caption{Complete 282-group qualification: each marker is a final simultaneous upper bound, normalized by $p_{\mathrm t}=0.001$. Blue circles denote primary bounds and orange triangles contingency bounds. Audit index is an identifier, not an angular coordinate or evaluation time. The gray trace is the corrected envelope mean. All final bounds are below the dashed target.}
\label{fig:groups}
\end{figure}

The target $\pt=10^{-3}$ is a moderate rare-outage benchmark. Related FAS studies evaluate outage curves through this range~\cite{new2024}; their plotted range does not prescribe an application requirement. The target was fixed before final confirmation, whereas $x=0.108$ was selected on separate design data. The method accepts a generic target, but the present experiment does not qualify this same rule at a stricter probability.

\begin{table}[t]
\caption{Benchmark parameters and certification results}
\label{tab:evidence}
\centering
\begin{tabular}{ll}
\hline
Quantity & Recorded value\\
\hline
Candidate positions / current probes & 32 / 5\\
Target / threshold & 0.001 / 0.108\\
Continuous risk groups & 282\\
Largest certified upper & 0.0009997474562\\
Original history observations & 25000\\
Recalibration history observations & 75000\\
Recalibration activity time & 169.967 s\\
\hline
\end{tabular}
\end{table}

\subsection{Independent Performance Controls}
A fresh fixed one-million-input coupled comparison at true mean/spread $(30,15)$ gives current-rule risk interval $[0.0007682054,0.0009126514]$. The paired risk-difference upper bounds are $-0.0001299014$ against a strong five-observed-port rule and $-0.0000492295$ against the fixed six-observed-port rule. This four-query comparison has its own $0.05$ failure budget, separate from the certification family.

Both observed-position controls minimize fitted conditional outage over their permitted action sets. The strong five-probe control uses the same five observations but may serve only those positions. The six-probe control adds index 4 and may serve only $[0,4,8,16,23,31]$; it uses the same historical fit, without an optimized sixth location or extra training data. The result isolates a benefit of using unobserved service positions under the stated model. It neither ranks all possible five/six-probe schedules nor establishes that increasing a nested observation set worsens the optimal receiver. All three rules have five or six decision-before-service probes; subsequent activation and service-channel estimation are separate actions.

\begin{table}[t]
\caption{Independent current-rule comparison at $(30^\circ,15^\circ)$}
\label{tab:comparison}
\centering
\begin{tabular}{lc}
\hline
Quantity & Bound\\
\hline
Current-rule risk & $[7.6820,9.1266]\times10^{-4}$\\
Current minus strong five-probe & $\le-1.2990\times10^{-4}$\\
Current minus fixed six-probe & $\le-4.9229\times10^{-5}$\\
\hline
\end{tabular}
\end{table}

\subsection{Fixed-Policy Recalibration Controls}
One prespecified new history with 5000 pilot and 10000 calibration blocks yields a confidence set contained in the old certified region. Recorded preparation and verification activities sum to 169.967 s, without new risk draws. Another 5000-block history has a verified confidence-set member outside that region, so direct transfer is unavailable and its additional-member risk remains unresolved. An off-grid history contains a member whose same-five-observation oracle risk has lower bound $0.0011689753>0.001$. That obstruction concerns a plausible set member; it does not prove that the actual generating parameter is unsafe.

Historical data acquisition, dedicated failure-label acquisition, offline computation and final service actions are distinct costs. Selected preparation activities and confirmation sum to 10.405 recorded hours; this is neither a completely timed cold-start workflow nor a lower bound for optimized parallel implementations.

\subsection{Online Observation and Selection Costs}
Let $m=|S|$ and let $N_{\rm act}$ denote the number of eligible service positions. Table~\ref{tab:onlinecost} separates measurement and action budgets. With cached fitted coefficients, direct dense evaluation of all conditional means requires $mN_{\rm act}$ complex products: 160, 25 and 36 for the three rules, respectively. Each eligible action also has a conditional-outage score. Pruning can reduce special-function evaluations, so these dense arithmetic counts are not runtime measurements.

\begin{table}[t]
\caption{Per-block observation and action budgets. Probe changes exclude the initial move and final service activation.}
\label{tab:onlinecost}
\centering
\begin{tabular}{lccc}
\hline
Rule & Probes $m$ & Actions $N_{\rm act}$ & \shortstack{Probe\\changes}\\
\hline
All-position & 5 & 32 & 4\\
Observed five & 5 & 5 & 4\\
Observed six & 6 & 6 & 5\\
\hline
\end{tabular}
\end{table}

For ordered probes $s_1,\ldots,s_m$, an initial state $s_0$ and selected service state $J=\pi_D(\mathbf Y)$, a physical pre-service budget has the form
\begin{align}
T_{\rm pre}={}&\sum_{i=1}^m\left[T_{\rm sw}(s_{i-1},s_i)+T_{\rm p}(s_i)\right]\nonumber\\
&+T_{\rm dec}(m,N_{\rm act})+T_{\rm sw}(s_m,J)+T_{\rm CSI}(J).
\label{eq:onlinecost}
\end{align}
Here the terms account for state reconfiguration, probe acquisition, selection and additional service-channel training. The last term can be zero when the available observations suffice for service. The total must leave payload time within the assumed coherence block. Its components depend on the hardware and implementation; the offline qualification cost is separate.

A descriptive 20000-block action record selects an unobserved service position in $87.810\%$ of blocks and changes state after the last probe in $95.760\%$. This gives 4.9576 state changes on average from the first probe through final service activation, excluding the initial move. These fractions characterize one saved schedule. They show why five initial probes do not account for every service action; no measured switching-time or energy advantage is inferred.

\subsection{Direct-Label Validation}
For comparison, consider independent labels for the same event $\{|h_J|^2<x\}$ under a fixed policy. A one-sided exact binomial test at level $0.05$, equivalently a one-sided exact confidence-limit inversion~\cite{clopper1934}, accepts a target of $0.001$ when the failure count is no larger than its precomputed cutoff. For 25000 labels this cutoff is 16. At a true risk in the independent comparison interval, its probability of passing is approximately $8.77\%$--$27.64\%$. At 800000 labels, it is approximately $80.71\%$--$100\%$. Figure~\ref{fig:labelpower} shows these deterministic power projections across the saved label counts. They are conditional on the stated risk interval and are not outcomes of new label experiments or a universal required sample size.

\begin{figure}[t]
\centering
\includegraphics[width=\columnwidth]{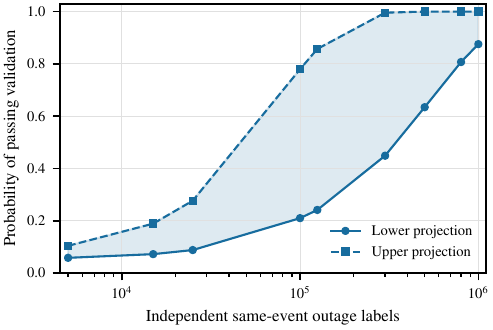}
\caption{Exact-binomial validation power projected from the independent current-rule risk interval. The lower/upper curves correspond to the interval's larger/smaller risk, respectively. Shading shows the projection range, not a newly estimated confidence band. The test has nominal failure level 0.05 and target 0.001; label counts are independent service-event labels, not historical port observations.}
\label{fig:labelpower}
\end{figure}

Model-based qualification uses 25000 historical port observations, while each direct label requires a service-event evaluation after probing and selection. These information and action budgets cannot be divided to claim a universal acquisition-saving factor. Naturally available same-event feedback may make direct validation preferable. Conversely, a previously qualified fixed rule can use new partial-observation history for set-inclusion recalibration without another risk integration.

%% file: sections/07_discussion.tex
\section{Discussion}
\subsection{What the Certificate Adds}
The method supplies an admissibility decision for a fitted receiver: whether its future outage meets the target throughout the history-supported model uncertainty. The selector uses spatial dependence to act on unobserved positions; certification checks the reliability of that decision under plausible covariances. Its main computational contribution is an envelope that remains valid across continuous parameters and discontinuous selections. Neither a fitted-covariance outage value nor a reconstruction-error interval alone answers this qualification question.

The worked certificate and independent performance comparison serve different purposes. The 282 simultaneous bounds support qualification across the retained region. The coupled comparison at one true parameter supports the local benefit of permitting unobserved service states. Its separate failure budget is not combined into the certificate's 0.05 statement. Physical candidate count $N$, probe count $m$ and historical block count $B$ respectively set the action space, current information and statistical evidence; the example does not optimize these resources.

\subsection{Operational Tradeoffs}
At the same five observations and historical fit, the all-position rule achieves lower outage than the observed-five control. The fixed observed-six control also has higher outage in this comparison, but does not represent an optimized six-probe receiver. Table~\ref{tab:onlinecost} identifies the associated cost: more eligible actions increase selection work, and an unobserved service state can require additional activation or training. Equation~\eqref{eq:onlinecost} separates these expenses from initial probing. Translating the risk improvement into throughput or energy benefit requires measured component times and a coherence budget. Spatio-temporal studies that include switching delays~\cite{dinis2026} address that complementary implementation question.

Qualification is most useful when finite partial-observation history is available and dedicated service-event labels are costly to obtain. The initial computation is substantial, whereas fixed-policy set inclusion avoids repeating risk integration. The recorded 10.405 hours and 169.967 seconds describe different selected activities, not matched end-to-end timing experiments. If same-event feedback is naturally available, direct validation may be preferable; its weaker reliance on a channel-family specification is also an advantage.

\subsection{Applicability and Reuse}
Reuse preserves the policy, threshold and risk chart. It tests new statistical evidence rather than retraining the selector, and every new announcement still consumes a history-coverage budget. Valid coverage need not lead to a positive decision: the unresolved and adverse-member controls demonstrate that limitation. A finite predictive mixture can limit acceptance capability even when its coverage argument remains valid.

The guarantees assume the stated stationary Gaussian family, independent historical blocks and integration inputs, and numerical contracts. Recalibration fits its predictive density on an independent pilot; the policy itself may use the same history as calibration, as explained in Section~III-A. Deployment must implement the frozen score ordering and tie handling; a change of fitted coefficients, threshold or selection logic requires new qualification. Channel aging, model mismatch and acquisition costs require separate validation before a physical reliability claim. The complete numerical instance establishes the model-based method without supplying a general minimum-probe law or hardware-cost guarantee.

%% file: sections/08_conclusion.tex
\section{Conclusion}
The proposed method connects finite-history covariance uncertainty to a deployment decision for a frozen, partially observed FAS receiver. Continuous Gaussian risk envelopes and selection enclosures support a single-procedure erroneous-positive-certification budget of 0.05 under the stated model and numerical assumptions. In the 32-position, five-probe example, all 282 group bounds lie below the outage target of 0.001. A separate coupled comparison yields risk interval $[7.6820,9.1266]\times10^{-4}$ and lower outage than two fixed observed-port controls, supporting a local benefit of the larger service-action set.

Confidence-set inclusion qualifies one new independent history with 169.967~s of recorded activity and no new risk samples, while preserving the rule and operating threshold. The original and new announcements together have the stated union-bound budget of 0.074; unsuccessful controls show that reuse is not automatic. The resource account distinguishes five initial observations from online selection, activation and service-channel training. Thus the demonstrated outcomes are model-based qualification, local action-space benefit and fixed-policy reuse. Measured acquisition costs and channel-model validation are needed to establish their physical throughput or energy benefit.

%% file: sections/09_appendices.tex
\appendices
\section{Positive Disk-Probability Representation}
\label{app:cdf}
For $z>0$ the integer-shape gamma CDF satisfies
\begin{align}
G_{k+1}(z)&=e^{-z}\frac{z^{k+1}}{(k+1)!}H_k(z),\\
H_k(z)&=\sum_{m=0}^{\infty}\frac{z^m(k+1)!}{(k+1+m)!}.
\end{align}
Substitution into its Poisson mixture gives
\begin{equation}
F(a,z)=e^{-a-z}z\sum_{k=0}^{\infty}C_kH_k(z),\quad
C_k=\frac{(az)^k}{k!(k+1)!}.
\label{eq:positivecdf}
\end{equation}
All terms are nonnegative. Consecutive inner terms have ratio $z/(k+2+m)$, decreasing with $m$. Once the ratio is less than one, a geometric series bounds the remaining tail. Since $H_{k+1}\le H_k$ termwise, successive outer terms have ratio at most $az/[(k+1)(k+2)]$. After that ratio falls below one, a geometric bound likewise controls the outer tail. Directed evaluation and outward scaling preserve very small probabilities. Port ordering uses enclosures rather than returned zero probabilities.

\section{Conditional Anchor Bound}
\label{app:anchor}
For a service value $u$, the conditional-density log-ratio is
\begin{align}
\log\frac{f_c(u)}{f_a(u)}={}&\log(v_a/v_c)+(1/v_a-1/v_c)|u|^2\nonumber\\
&+2\operatorname{Re}[u^*(m_c/v_c-m_a/v_a)]\nonumber\\
&+|m_a|^2/v_a-|m_c|^2/v_c.
\end{align}
On $|u|^2<x$, the quadratic term is at most $x[1/v_a-1/v_c]_+$, and the linear term at most $2\sqrt{x}|m_c/v_c-m_a/v_a|$. The ratio is thus at most $e^{\Lambda_c}$. Disk integration gives $F_c\le e^{\Lambda_c}F_a$; intersecting with $F_c\le1$ proves~\eqref{eq:anchor} without requiring equal means.

\section{Inverse Residual and Source Perturbation}
\label{app:inverse}
Let $\mathbf B$ approximate the inverse of a stored positive-definite center $\mathbf K$. Suppose $\|\mathbf K\mathbf B-\mathbf I\|\le r<1$, $\|\mathbf B\|\le\beta$ and $\|\mathbf K_{\rm true}-\mathbf K\|\le\varepsilon$. Then $\|\mathbf K^{-1}\|\le\beta/(1-r)$ and $\|\mathbf B-\mathbf K^{-1}\|\le\beta r/(1-r)$. If $\varepsilon\beta/(1-r)<1$, a Neumann bound gives
\begin{align}
\|\mathbf B-\mathbf K_{\rm true}^{-1}\|
\le{}&\frac{\beta r}{1-r}\nonumber\\
&+\frac{\beta^2\varepsilon}{(1-r)^2[1-\varepsilon\beta/(1-r)]}.
\end{align}
The squared factor $(1-r)^2$ pays for two stored-inverse norms in the source-perturbation term. Applying this bound to the likelihood gradient separates inverse residual, covariance-source and short matrix-product errors before continuous-box exclusion.